\newcommand{\vp}{\varphi}
\newcommand{\dx}{\dot{x}}
\newcommand{\ddx}{\ddot{x}}
\newcommand{\ve}{\varepsilon}
\newcommand{\etb}{e^{\pm\!\int_t\!b d\tau}}
\newcommand{\etbb}{e^{\pm\!\int_t\!b_\pm d\tau}}
\newcommand{\esb}{e^{\pm\!\int_s\!b d\tau}}
\newcommand{\esbb}{e^{\pm\!\int_s\!b_\pm d\tau}}
\newcommand{\B}{\bar{B}}
\newcommand{\dt}{\boldsymbol{\cdot}}
\def\smallint{\begingroup\textstyle \int\endgroup}
\newcommand{\p}[1] {\partial_{#1}}
\newcommand{\bc}[1] {\bar{c}_{#1}}
\title{First integrals of nonlinear differential equations from nonlocal constants}
\author{Mattia Scomparin}
\begin{document}
\maketitle

\section{Introduction}
\label{sec:intro}

Many phenomena are mathematically described by second-order \emph{nonlinear} Ordinary Differential Equations (ODEs). Generally, nonlinear ODEs are representative of models with a complicated behaviour  and some of them emerge as \emph{Euler-Lagrange} equations
\begin{equation}\label{eq:E-L}
\big[\partial_{\dx} L\big(t,x(t),\dx(t)\big)\big]^{\dt}\!-\partial_{x} L\big(t,x(t),\dx(t)\big)=0\,,
\end{equation}
for an appropriate \emph{Lagrangian} function $L(t,x,\dx)$. In our notation $\partial$ stands for partial derivative, e.g. $\partial_{\dx}\!=\!\partial/\partial\dx$, whereas the upper dot means total derivative with respect to $t$.

Numerous methods are available in the literature to get exact solutions of equation \eqref{eq:E-L} and, among all, finding \textit{first integrals} allows to obtain solutions in the form of quadratures \cite{XXSTANI}.
By definition, a first integral is a smooth point-function $\mathcal{I}(t,x,\dx)$ that is constant throughout equation \eqref{eq:E-L} for all $t$, i.e. $\dot{\mathcal{I}}(t,x,\dx)=0$ along solutions. In the case of linear ODEs, a number of well defined methods to find first integrals exists. However, the same cannot be said for the nonlinear case, where their direct computation is in general an open problem. In this respect, the celebrated \textit{Noether's Theorem} establishes a relation between invariance proprieties of a Lagrangian function and first integrals \cite{Noeth}. Nevertheless, finding non-trival symmetries is not always simple for many instances. 

In favourable circumstances, by inspection of the Lagrangian taken into consideration, 
one can find particular cases for which special functions, called nonlocal constants, yield true first integrals without appealing to Noether symmetries. In particular, a \textit{nonlocal constant} is a function that is conserved by equation \eqref{eq:E-L} whose value depends not only on the state $(t,x(t),\dx(t))$, but also on the whole history of the motion from its beginning at $t_0$. In Ref. \cite{Gorni2021} Gorni and Zampieri provide  a theory to generate nonlocal constants for second order ODE Lagrangians using the concept of one-parameter perturbed motions:
\begin{theorem}\label{teo:nonloconst}
Let $t\rightarrow x(t)$ be a solution to the Euler-Lagrange equation \eqref{eq:E-L}, and let $x_\ve(t)$, $\ve\in\mathbb{R}$, be a smooth family of perturbed motions such that $x_0(t)=x(t)$. Then, the following function is constant:
\begin{equation}\label{eq:nonlocGZ}
I= \partial_{\dx} L\big(t,x(t),\dx(t)\big)\cdot \partial_\ve x_\ve(t)\big\rvert_{\ve=0}-\int_{t_0}^t 
\partial_\ve L\big(s,x_\ve(s),\dx_\ve(s)\big)\big\rvert_{\ve=0}\,ds\,.
\end{equation}
\end{theorem}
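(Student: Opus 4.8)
The plan is to show that $I$ is constant by differentiating it with respect to $t$ and checking that $\dot{I}=0$ along the given solution. To streamline the computation I would first introduce the shorthand $p(t)=\p{\dx} L\big(t,x(t),\dx(t)\big)$ for the momentum conjugate to $x$ and $\eta(t)=\p{\ve} x_\ve(t)\big\rvert_{\ve=0}$ for the infinitesimal variation carried by the family. A preliminary remark I would make is that, on a smooth family, the order of the $\ve$- and $t$-derivatives may be exchanged, so that $\p{\ve}\dx_\ve(t)\big\rvert_{\ve=0}=\dot{\eta}(t)$: the variation of the velocity equals the velocity of the variation.

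Next I would differentiate the two pieces of $I$ separately. The integral term yields, by the Fundamental Theorem of Calculus, simply $-\p{\ve} L\big(t,x_\ve(t),\dx_\ve(t)\big)\big\rvert_{\ve=0}$, which I would then expand by the chain rule and evaluate at $\ve=0$ as
\begin{equation*}
\p{\ve} L\big\rvert_{\ve=0}=\p{x}L\cdot\eta+\p{\dx}L\cdot\dot{\eta}=\p{x}L\cdot\eta+p\cdot\dot{\eta}\,,
\end{equation*}
using the commutation of derivatives from the previous step. For the boundary term the product rule gives $\frac{d}{dt}\big(p\cdot\eta\big)=\dot{p}\cdot\eta+p\cdot\dot{\eta}$.

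Adding the two contributions, I expect the terms $p\cdot\dot{\eta}$ to cancel, leaving
\begin{equation*}
\dot{I}=\big(\dot{p}-\p{x}L\big)\cdot\eta\,.
\end{equation*}
The factor $\dot{p}-\p{x}L$ is exactly the left-hand side of the Euler-Lagrange equation \eqref{eq:E-L}, which vanishes because $x(t)$ is by hypothesis a solution; hence $\dot{I}=0$ identically and $I$ is constant, regardless of the chosen perturbed family. The argument is essentially mechanical, and the only point that requires genuine care---the main obstacle, such as it is---is the interchange of the $\ve$- and $t$-differentiations, which is legitimate precisely because the family $\ve\mapsto x_\ve$ is assumed smooth enough for the mixed partials to commute.
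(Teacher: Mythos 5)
Your proof is correct: it is the standard argument for this theorem, which the paper itself does not prove but imports from Gorni--Zampieri \cite{Gorni2021}, where the proof proceeds exactly as you describe (differentiate $I$, use smoothness to commute the $\ve$- and $t$-derivatives, cancel the $p\cdot\dot\eta$ terms, and invoke the Euler--Lagrange equation). Nothing is missing.
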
 

This result has been applied by Gorni and Zampieri in Refs. \cite{Gorni2021,Gorni_2020,Gorni2023} to study some nonlinear mechanical systems, which include: (i) homogeneous potentials of degree $k = -2$, (ii) mechanical systems with viscous fluid resistance, (iii) mechanical system with hydraulic (quadratic) fluid resistance, (iv) conservative and dissipative Maxwell-Bloch equations of laser dynamics, and (v) geodesics for the Poincar\'e's half-plane.
Beside that, the author extended Theorem \ref{teo:nonloconst}  and related applications to the framework of higher-order theories \cite{10077_31215,GORNI2022100262} and relativistic scalar field theories \cite{SCMromp}.

A viable parametrization to study second-order non-linear ODEs is the so-called \emph{Jacobi-type} form (or, Jacobi equation) \cite{Jacobi1845,Nucci2008UsingAO,GUHA20103247}. It is a second-degree polynomial in $\dx$ like this 
\begin{equation}\label{eq:J}
\ddx+\tfrac{1}{2}\partial_x\vp(t,x)\,\dx^2+\partial_t\vp(t,x) \,\dx+B(t,x)=0\,,
\end{equation}
where the two free functions $\vp(t,x)$ and $B(t,x)$ are both analytical in $t$ and $x$. Interestingly, by taking  $\vp(t,x)$ and $B(t,x)$ appropriately, one can reproduce a great variety of nonlinear ODEs. Among all, we recall some of the second-order equations of the \emph{Painlev\'{e}-Gambier} classification (see, e.g. Refs. \cite{10.1007/BF02419020, 10.1007/BF02393211,Dambrosi2021}).

Painlev\'{e}-Gambier equations are  attracting much attention in last years, since some problems related to their solutions, the \emph{Painlev\'{e} transcendents}, are already under discussion  \cite{2019xdf,XX2022}. 
In particular, it is well known that the only movable singularities of these equations are poles.

In  Refs. \cite{GUHA20103247, GUHA2022} Guha \emph{et al.} studied a subset of such equations, showing that \emph{generalized Sundman transformations} can be used to obtain certain new first integrals and particular solutions. To do this, they started from \eqref{eq:J} and derived some conserved functions of the form $p_1(t,\dx,x)\,e^{\int_t\!p_2(\tau,x(\tau)) d\tau}$ for one Painlev\'{e}-Gambier equation at a time.   As may be clearly seen, such functions ``appear''  nonlocal due to the presence of an exponential factor whose argument is an integral of $p_2(t,x)$. 

Therefore, it is natural to suppose that a not yet investigated interplay between Theorem \ref{teo:nonloconst} and equation \eqref{eq:J} exists. The discovery of such relation is the main result of our work.
Remarkably, thanks to the Lagrangian-based approach underlying nonlocal constants like  \eqref{eq:nonlocGZ}, our  analysis allows to recover and generalize the results of Guha \emph{et al.} as single formulas  that have the advantage of being applicable to the whole parametrization \eqref{eq:J}.

As pointed before, Theorem \ref{teo:nonloconst} works with Euler-Lagrange equations only and  to proceed we need an appropriate Lagrangian for equation \eqref{eq:J}.
For our purposes comes to help a result obtained by Nucci and Tamizhmani in Ref. \cite{Nucci2008UsingAO}. In that article they present a method devised by Jacobi to derive Lagrangians of many second-order ODEs. Among the examples provided, they demonstrate that equation \eqref{eq:J}  can be derived as the Euler-Lagrange equation for a Lagrangian
\begin{equation}\label{eq:Nucci}
L(t,x,\dx)\equiv\tfrac{1}{2}e^{\vp(t,x)}\dx^2+\delta_1(t,x)\dx+\delta_2(t,x)\,,
\end{equation}
where the functions $\delta_{1}$ and $\delta_{2}$ satisfy the condition $\partial_t\delta_1-\partial_x\delta_2=e^\vp B$. Expression \eqref{eq:Nucci} completes the set of information necessary to operationalize Theorem \ref{teo:nonloconst} and proceed with our analysis.

The organization of the paper is as follows. For pedagogical reasons, in Section \ref{sec:aut} we deal with the simplest case of the autonomous Jacobi equation \eqref{eq:J}, or with $\p{t}\vp=\p{t}B=0$. We prove through Theorem \ref{teo:nonloconst} the existence of the true first integral \eqref{eq:I}, that we use in Subsection \ref{sec:oiu} to recover some Painlev\'{e}-Gambier constants and solutions obtained in Refs. \cite{GUHA20103247, GUHA2022}.
In Section \ref{eq:lowe} we extend the autonomous picture to the whole  Jacobi equation \eqref{eq:J} by distinguishing the two cases (i) $\p{t}\vp=0$, $\p{t}B\ne 0$ and (ii) $\p{t}\vp\ne0$, $\p{t}B\ne 0$. Here, our Theorem \ref{teo:xxffx} (Subsection \ref{eq:pt0}) and Theorem \ref{teo:xxkjkffx} (Subsection \ref{eq:lowe}) lead to nonlocal constants whose integrand depends on $t$ and $x$ only (and not $\dx$). 
In both cases, we finally investigate the effective potential of such theorems obtaining constants and solutions for some non-autonomous Painlev\'{e}-Gambier equations \cite{GUHA20103247, GUHA2022} and beyond.

The main results of this work suggest that Lagrangian-based nonlocal constants like \eqref{eq:nonlocGZ} have interesting applications to the study of nonlinear ODEs like \eqref{eq:J}. We believe that our approach can further be extended to, in principle, many other cases with little additional effort.


\section{Autonomous Jacobi equation}
\label{sec:aut}

In this preliminary section we are going to apply Theorem \ref{teo:nonloconst} to get nonlocal constants and related first integrals
for the \emph{autonomous} Jacobi equation
\begin{equation}\label{eq:reom}
\ddx+\tfrac{1}{2}\vp'\dx^2+B=0\,,
\end{equation}
where $\vp=\vp(x)$ and $B=B(x)$ do not explicitly depend on the time variable $t$. Here, the prime symbol stands for total derivatives with respect to $x$. 

As anticipated in the introduction, the general Jacobi-type form \eqref{eq:J} can be derived as the Euler-Lagrange equation \eqref{eq:E-L} for a Lagrangian \eqref{eq:Nucci}.  Therefore, the autonomous Lagrangian for \eqref{eq:reom} can be expressed as 
\begin{equation}\label{eq:rL}
\mathcal{L}(x,\dx)=\tfrac{1}{2}e^{\vp}\dx^2+\delta_2\,,
\end{equation}
for some $\delta_2=\delta_2(x)$ such that
\begin{equation}\label{eq:constrL}
\delta_2'=-e^{\vp}B\,.
\end{equation}
Hence, writing down the nonlocal constant \eqref{eq:nonlocGZ} in terms of \eqref{eq:rL}, we have
\begin{equation}\label{eq:nn}
I=e^\vp \dx \cdot \partial_\ve x_\ve\big\rvert_{\ve=0}
-\int_{t_0}^t \Big\{\big(\tfrac{1}{2}\vp' e^\vp\dx^2+\delta_2'\big)\cdot \partial_\ve x_\ve\big\rvert_{\ve=0}+e^\vp \dx\cdot \partial_\ve \dx_\ve\big\rvert_{\ve=0}\Big\}\, ds\,,
\end{equation}
where the perturbed family $x_\ve$ has not yet been included. 
Usefully, since $x_\ve$ is assumed to be a smooth function, we can rewrite $\partial_\ve \dx^\pm_\ve\rvert_{\ve=0}=(\partial_\ve x^\pm_\ve\rvert_{\ve=0})^{\dt}$. 

In literature, some results about autonomous Painlev\'{e}-Gambier equations present constants that are nonlocal due to the presence of an integral in the argument of an exponential factor \cite{GUHA20103247}. Inspired by this fact, it is quite natural for us to consider as perturbed  family the $x$-shift parametrization
\begin{equation}\label{eq:pert}
x_\ve^\pm=x+\ve a\etb\,,
\end{equation}
where $a=a(x)$ and $b=b(x)$ are free functions. The $\pm$ formalism indicates that we are simultaneously considering both families with a positive and a negative sign in the argument of the exponential.
With the family \eqref{eq:pert} and the constraint \eqref{eq:constrL}, our nonlocal constant  \eqref{eq:nn} can be simplified to be
\begin{equation}\label{eq:I1}
I_\pm=a\dx e^{\vp}\etb-\int_{t_0}^t\!\Big\{\!\left(\tfrac{1}{2}a\vp'+a'\right)e^{\vp}\dx^2\pm a b e^{\vp}\dx-aBe^{\vp}\Big\}\,\esb ds\,.
\end{equation}

Formally, expression \eqref{eq:I1} exhibits a double integration that we would like to remove. Furthermore we would also like to avoid the dependence of integrands on $\dx$. A possible approach in this direction is to (i) cancel the term proportional to $\dx^2$ by imposing $\tfrac{1}{2}a\vp'+a'=0$, which gives
\begin{equation}\label{eq:a}
a=e^{-\vp/2}\,,
\end{equation}
and (ii) reformulate  the integrand in $ds$ as a total derivative.

To establish request (ii), let us  assume that $B\equiv b\B$ for some $\B=\B(x)$. 
Using the above definition, expression \eqref{eq:I1} reduces to
\begin{equation}\label{eq:I2}
I_\pm=a\dx e^{\vp}\etb-\int_{t_0}^t\!\Big\{\!(\pm a b \dx e^{\vp})\,\esb+(\mp a\B e^{\vp})\big(\esb\big)^{\dt}\Big\}\, ds\,.
\end{equation}
Thus, assuming $\pm a b \dx e^{\vp}=(\mp a\B e^{\vp})^{\dt}$ as constraint, the integrand of expression \eqref{eq:I2} becomes a total derivative as desired
\begin{equation}\label{eq:I3}
I_\pm=a\dx e^{\vp}\etb\pm\int_{t_0}^t\!\big(a\B e^{\vp}\esb\big)^{\dt}ds\,.
\end{equation}
It is not difficult to verify that the above constraint can be rewritten as
\begin{equation}\label{eq:b}
b=-\tfrac{1}{2}\vp'\B-\B'\,.
\end{equation}

According to \eqref{eq:I3}, we can evaluate the integral in the second term of the expression up to an additive integration constant that we assume to be zero.
Then, one can  use \eqref{eq:a} and \eqref{eq:b} to further simplify the result  as follows
\begin{equation}\label{eq:I4}
I_\pm=(\dx\pm\B)\,e^{\vp/2}e^{\mp\!\int_t(\vp'\!\B/2+\B')d\tau}\,.
\end{equation}

Note that expression \eqref{eq:I4} is a nonlocal constant for the autonomous Jacobi equation \eqref{eq:reom} that still depends on $\B$. 
By construction, equation \eqref{eq:b} constraints $\B$ as a function of $b$ and $\vp$. 
So, multiplying equation \eqref{eq:b} by $\B$ and defining $y\equiv\B^2$, we get $y'+\vp'y+2B=0$.  This is a first-order linear ODE, whose solution is
\begin{equation}
\label{eq:rBbar}
\B=\sqrt{-2e^{-\vp}\smallint_xe^\vp B\,dx}
=\sqrt{2\delta_2 e^{-\vp}}\,.
\end{equation}

\begin{theorem}\label{teo:xxx}
Let $e^{\vp}B=-\delta_2'$ for some $\delta_2$. Then, $\mathcal{I}$ is a first integral for the autonomous Jacobi equation $\ddx+\tfrac{1}{2}\vp'\dx^2+B=0$ with
\begin{equation}\label{eq:I}
\mathcal{I}=\tfrac{1}{2}\dx^2e^{\vp}-\delta_2\,.
\end{equation}
\end{theorem}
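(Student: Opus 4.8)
The plan is to produce $\mathcal{I}$ as a genuinely \emph{local} combination of the two nonlocal constants already built in \eqref{eq:I4}. By Theorem \ref{teo:nonloconst} both $I_+$ and $I_-$ are constant along every solution of \eqref{eq:reom}, and hence so is any function of them; the key observation is that the two exponential factors in $I_+$ and $I_-$ carry opposite signs in the exponent, so they are mutual reciprocals and cancel in the product $I_+I_-$. This cancellation of the nonlocal part is exactly what turns a \emph{pair} of nonlocal constants into an honest first integral, and it explains why $\mathcal{I}$ in \eqref{eq:I} no longer contains any integral.

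Concretely, I would first multiply the two expressions in \eqref{eq:I4}: the prefactors $e^{\vp/2}$ combine to $e^{\vp}$, the reciprocal exponentials $e^{\mp\int_t(\vp'\B/2+\B')d\tau}$ cancel against each other, and the algebraic factors $(\dx\pm\B)$ give a difference of squares. This yields $I_+I_-=(\dx^2-\B^2)\,e^{\vp}$. Next I would eliminate $\B$ through \eqref{eq:rBbar}, that is $\B^2=2\delta_2 e^{-\vp}$, so that $\B^2 e^{\vp}=2\delta_2$ and therefore $I_+I_-=\dx^2 e^{\vp}-2\delta_2=2\mathcal{I}$. Since the left-hand side is a product of two constants, $\mathcal{I}=\tfrac12 I_+I_-$ is constant along solutions, which is precisely the claim.

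As an independent check that relies only on the hypothesis $e^{\vp}B=-\delta_2'$, one can differentiate $\mathcal{I}$ directly: along the motion $\dot{\mathcal{I}}=\dx e^{\vp}\big(\ddx+\tfrac12\vp'\dx^2\big)-\delta_2'\dx$, and substituting the Jacobi equation $\ddx+\tfrac12\vp'\dx^2=-B$ together with $e^{\vp}B=-\delta_2'$ makes the whole expression collapse to $-\dx\,(e^{\vp}B+\delta_2')=0$. The only place where I would expect a real subtlety is the status of $\B$ itself: the construction of \eqref{eq:I4} presupposes that $\B$ solves \eqref{eq:b}, and the closed form \eqref{eq:rBbar} contains a square root, so each individual $I_\pm$ may fail to be real (or single-valued) where $\delta_2 e^{-\vp}<0$. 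This difficulty, however, never propagates to $\mathcal{I}$, because only the combination $\B^2=2\delta_2 e^{-\vp}$ survives in the product, and that quantity is smooth wherever $\vp$ and $\delta_2$ are. The direct computation above then confirms, independently of any reality issue for $\B$, that $\mathcal{I}$ is a bona fide first integral under the single stated hypothesis.
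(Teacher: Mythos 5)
Your first two paragraphs reproduce the paper's own proof exactly: compute $\mathcal{I}\equiv\tfrac{1}{2}I_+I_-$, observe that the reciprocal exponential factors cancel to give $\tfrac{1}{2}(\dx^2-\B^2)e^{\vp}$, and then substitute $\B^2=2\delta_2 e^{-\vp}$ from \eqref{eq:rBbar}. Where you go beyond the paper is the final paragraph: the direct differentiation $\dot{\mathcal{I}}=\dx e^{\vp}\bigl(\ddx+\tfrac{1}{2}\vp'\dx^2\bigr)-\delta_2'\dx=-\dx\bigl(e^{\vp}B+\delta_2'\bigr)=0$ is a complete, self-contained proof in its own right, and it is actually the more robust one. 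The paper's argument inherits all the scaffolding behind \eqref{eq:I4}: the constancy of $I_\pm$ presupposes the choices \eqref{eq:a} and \eqref{eq:b}, the vanishing integration constant, and the reality and single-valuedness of the square root in \eqref{eq:rBbar} (i.e.\ $\delta_2 e^{-\vp}\ge 0$), none of which appear among the hypotheses of the theorem as stated. Your closing check shows the conclusion holds under the single hypothesis $e^{\vp}B=-\delta_2'$, independently of whether $\B$, and hence each individual $I_\pm$, is even well defined; you correctly identify this as the one genuine subtlety. So: same route as the paper for the derivation, plus an elementary verification that closes a gap the paper leaves implicit.
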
 
\begin{proof}
Use expression \eqref{eq:I4} to compute $\mathcal{I}\equiv\tfrac{1}{2} I_+I_-=\tfrac{1}{2}(\dx^2-\B^2)\,e^{\vp}$. Then, substitute expression \eqref{eq:rBbar} inside $\mathcal{I}$.
\end{proof}


\subsection{Application: autonomous Painlev\'{e}-Gambier equations}
\label{sec:oiu}

There are several equations of the \emph{Painlev\'{e}-Gambier} classification that belong to the autonomous  parametrization \eqref{eq:reom} (see, e.g. Refs. \cite{GUHA20103247,Dambrosi2021}).  

\begin{example}

As we show in more details below, the constants we found in the previous section are useful to study some  Painlev\'{e}-Gambier equations.
The Painlev\'{e}-Gambier equations {\footnotesize XVIII}, {\footnotesize XXI}, and {\footnotesize XXII} can be parametrized as
\begin{equation}\label{eq:Painleve}
\ddx-\tfrac{1}{2}\alpha x^{-1}\dx^2+\beta x^n=0\,,
\end{equation}
with $\{\alpha,\beta,n\}\equiv\chi$ sets of constant parameters
\begin{equation}\label{eq:coeffsPain}
\chi_{\mbox{\tiny XVIII}}\!=\!\big\{1,\!-4,2\big\}\quad
\chi_{\mbox{\tiny XXI}}\!=\!\big\{\tfrac{3}{2},\!-3,2\big\}\quad
\chi_{\mbox{\tiny XXII}}\!=\!\big\{\tfrac{3}{2},1,0\big\}.
\end{equation}

First of all, a comparison between equation \eqref{eq:reom} and equation \eqref{eq:Painleve} yields  $\vp=-\alpha\ln x$ and $B=\beta x^n$. Then, following the definition \eqref{eq:rBbar}, we deduce that $\B=\sqrt{\beta \gamma}\, x^{(n+1)/2}$ with $\gamma= 2/(\alpha-n-1)$. 

Hence, after some simple calculations,  our nonlocal constants \eqref{eq:I4} 
become
\begin{equation}\label{eq:r1}
I_\pm=\big\{\dx x^{-\alpha/2}\pm \sqrt{\beta \gamma}\, x^{-1/\gamma}\big\}\,e^{\mp\sqrt{\beta/\gamma}\int_tx^{(n-1)/2}d\tau}\,,
\end{equation}
Correspondingly, the first integral \eqref{eq:I} is obtained to be
\begin{equation}\label{eq:r2}
\mathcal{I}=\dx^2x^{-\alpha}-\beta \gamma x^{-2/\gamma}\,.
\end{equation}

We mention that if we restrict to the $\mathcal{I}=0$ hypersurface, equation \eqref{eq:r2} can be simply solved. In particular, after integration we get
\begin{equation}\label{eq:xkre}
x(t) = A\left(t+c_1\right)^{\sigma},
\end{equation}
where $A\equiv(\sqrt{\beta\gamma}/\sigma)^\sigma$ and $c_1\equiv c/\sqrt{\beta\gamma}$, $c\in\mathbb{R}$,
with $\sigma\equiv \left(1/\gamma-\alpha/2+1\right)^{-1}$. 

When combined with coefficients \eqref{eq:coeffsPain}, our results \eqref{eq:r1}, \eqref{eq:r2} and \eqref{eq:xkre} match exactly with some constants and solutions proposed case-by-case in Refs.  \cite{GUHA20103247, GUHA2022} using generalized Sundman transformations.  The advantage of our approach lies in having identified a single formula for all of them.

\end{example}



\section{Non-autonomous Jacobi equation}

The procedure we discussed in the previous section have shown us the crucial ingredients to apply Theorem \ref{teo:nonloconst} within a simplified case of Jacobi equation.
Given the way to proceed, in the rest of this paper we turn to the large class of \emph{non-autonomous} Jacobi equation 
\begin{equation}\label{eq:eom}
\ddx+\tfrac{1}{2}\partial_x\vp\,\dx^2+\partial_t\vp \,\dx+B=0\,,
\end{equation}
whose parameters $\vp=\vp(t,x)$ and $B=B(t,x)$ explicitly depend on the time variable $t$. As anticipated in the introduction, this class of equations can be derived as the Euler-Lagrange equation \eqref{eq:E-L} for a Lagrangian 
\begin{equation}\label{eq:L}
L(t,x,\dx)\equiv\tfrac{1}{2}e^{\vp}\dx^2+\delta_1\dx+\delta_2\,,
\end{equation}
with $\delta_1=\delta_1(t,x)$ and $\delta_2=\delta_2(t,x)$ satisfying
\begin{equation}\label{eq:constrLL}
\partial_t\delta_1-\partial_x\delta_2=e^\vp B\,.
\end{equation}

With the above discussions in mind, 
let us begin evaluating expression \eqref{eq:nonlocGZ} on  Lagrangian \eqref{eq:L}. Expanding the derivatives of $L$ with respect to $x$ and $\dx$ we have a nonlocal constant that looks like this
\begin{equation}\label{eq:xx}
I=\big(e^\vp \dx+\delta_1\big)\cdot \partial_\ve x_\ve\big\rvert_{\ve=0}
-\int_{t_0}^t \xi\, ds\,,
\end{equation}
in which the integrand $\xi=\xi(s,x(s))$ is
\begin{equation}\label{eq:XII}
\xi \equiv \big(\tfrac{1}{2}\p{x}\vp \,e^\vp\dx^2+\p{x}\delta_1\dx+\p{x}\delta_2\big)\cdot \partial_\ve x_\ve\big\rvert_{\ve=0}\,+\big(e^\vp \dx+\delta_1\big)\cdot \partial_\ve \dx_\ve\big\rvert_{\ve=0}\,.
\end{equation}

Taking inspiration from the autonomous case, we choose the $x$-shift family
\begin{equation}\label{eq:pertx}
x_\ve^\pm=x+\ve a\etbb\,,
\end{equation}
as perturbed motion, where $a=a(t,x)$ and $b_\pm=b_\pm(t,x)$ are  free functions. 

Using expression \eqref{eq:pertx}, as well as the property $\partial_\ve \dx^\pm_\ve\rvert_{\ve=0}=(\partial_\ve x^\pm_\ve\rvert_{\ve=0})^{\dt}$ with the condition \eqref{eq:constrLL}, one finds that the integrand \eqref{eq:XII} can be expressed as
\begin{multline}\label{eq:mkj}
\xi_\pm=
\big(a\delta_1\esbb\big)^{\dt}+
(\tfrac{1}{2}a\p{x}\vp+\p{x}a)e^\vp\dx^2\esbb\\
+\big\{\big[(\p{t}a\pm a b_\pm)\,e^\vp \dx\big]\esbb-(a e^\vp B)\,\esbb\big\}\,.
\end{multline}

Compared with the autonomous case, the above expression looks more problematic to be integrated, as it now (i) explicitly depends on time, and (ii) exhibits new additional contributions. However, things become simplified by noting that various terms group themselves into total derivatives if we choose appropriately $a$ and $b_\pm$.

To achieve this, the structure of expression \eqref{eq:mkj} suggests first of all  to neglect the term proportional to $\dx^2$ by imposing $\tfrac{1}{2}a\p{x}\vp+\p{x}a=0$, which yields
\begin{equation}\label{eq:at}
a=e^{-\vp/2}\,.
\end{equation}
Going on, instead of working directly with $b_\pm$ it is more convenient to introduce a new variable $\B_\pm=\B_\pm(t,x)$ such that $B\equiv b_\pm\B_\pm$. Expression \eqref{eq:mkj} thus becomes
\begin{equation}\label{eq:kll}
\xi_\pm\!=\!
\big(a\delta_1\esbb\big)^{\dt}
\!+\!\big\{\big[(\p{t}a\pm a b_\pm)\,e^\vp \dx\big]\esbb\!+\!(\mp a e^\vp \B_\pm)\,\big(\esbb\!\big)^{\dt}\big\}.
\end{equation}
Then, if we further require that $b_\pm$ satisfies
\begin{equation}\label{eq:dd}
(\p{t}a\pm a b_\pm)\,e^\vp \dx=(\mp a e^\vp \B_\pm)^{\dt}\,,
\end{equation}
it is immediate to check that $\xi_\pm$ becomes a total derivative as anticipated before
\begin{equation}\label{eq:zxc}
\xi_\pm=\big[a\delta_1\esbb\mp(ae^\vp\B_\pm)\esbb\big]^{\dt}\,.
\end{equation}

According to \eqref{eq:zxc}, we can evaluate the integral in the second term of expression \eqref{eq:xx} up to an additive integration constant that we assume to be zero. More specifically, we get
\begin{equation}\label{eq:ddd}
I_\pm=(\dx\pm\B_\pm)\,a e^\vp\etbb\,,
\end{equation}

This result depends on $b_\pm$ and $\B_\pm$, whose explicit expressions are unknown so far. To reveal their form, let us consider condition \eqref{eq:dd} and expand it as follows
\begin{equation}\label{eq:lklk}
\mp(\p{t}a\pm a b_\pm)e^\vp \dx = \p{t}(a e^\vp \B_\pm)+\p{x}(a e^\vp \B_\pm)\dx\,.
\end{equation} 

Since equation \eqref{eq:lklk} must be true for all solutions, it can be decomposed with respect $\dx$. In this way, such equation can be easily reformulated as follows
\begin{equation}
\label{eq:klk}
\begin{cases}
\p{t}(a e^\vp \B_\pm)=0\,,\\
\p{x}(a e^\vp \B_\pm)=\mp(\p{t}a\pm a b_\pm)e^\vp\,.
\end{cases}
\end{equation}
It is immediate to check that the first equation of \eqref{eq:klk} is satisfied by 
\begin{equation}\label{eq:k}
k_\pm\equiv a e^\vp \B_\pm\,,
\end{equation}
with $k_\pm=k_\pm(x)$ a time-independent function. 
On the other hand, using the mixed partials equality $\p{x}\p{t}k_\pm=\p{t}\p{x}k_\pm$, the second equation of \eqref{eq:klk} provides the relation
\begin{equation}
\p{t}\big[(\p{t}a\pm a b_\pm)e^\vp\big]=0\,,
\end{equation}
that can be expanded to be a first-order (linear) differential equation for $b_\pm$ as follows
\begin{equation}\label{eq:ssdf}
a\,\p{t}b_\pm+(\p{t}\vp\, a+\p{t}a)\,b_\pm\pm(\p{t}\vp\p{t}a+\p{t}\p{t}a)=0\,.
\end{equation}

Using standard techniques we solve equation \eqref{eq:ssdf} and impose condition \eqref{eq:klk} to fix the integration constant. Hence, after several calculations, we finally obtain
\begin{equation}\label{eq:fgh}
b_\pm=\mp\,\p{t}a-e^{-\vp}a^{-1}\p{x}k_\pm\,.
\end{equation}

Since $B\equiv b_\pm\B_\pm$, we multiply expression \eqref{eq:fgh} by $\B$ and use definition \eqref{eq:k} to substitute $k_\pm$. Consequently, our expression simplifies to
\begin{equation}\label{eq:kjl}
\B_\pm\p{x}\B_\pm+\p{x}(\ln a+\vp)\B_\pm^2\pm(\p{t}\ln a) \B_\pm=-B\,,
\end{equation}
that finally constraints $\B_\pm$ in terms of the original functions $B$ and $\vp$ only. Once solved, the solution $B_\pm$ will automatically  fix $b_\pm(B,\vp)=B/ \B_\pm(B,\vp)$ by definition. 

In contrast with the autonomous case, equation \eqref{eq:kjl} is a nonlinear differential equation  for $\B_\pm$. 
Hence, without loss of generality, we choose to proceed by splitting our study depending on wether $\p{t}\vp\sim\,\p{t}\ln a$ is equal to zero or not.
  
  
\subsection{The $\p{t}\vp=0$ case}
\label{eq:pt0}

If $\p{t}\vp=0$, equation \eqref{eq:kjl} becomes
\begin{equation}\label{eq:ghj}
\B_\pm\p{x}\B_\pm+\p{x}(\ln a+\vp)\B_\pm^2=-B\,.
\end{equation}
As in the autonomous case, substituting definition \eqref{eq:at} in equation \eqref{eq:ghj} and assuming $y_\pm\equiv\B_\pm^2$, we get 
\begin{equation}
\p{x} y_\pm+(\p{x}\vp) y_\pm+2B=0\,.
\end{equation}
This is a first-order linear ODE, whose solution is
\begin{equation}\label{eq:rBbare}
\B=\sqrt{-2e^{-\vp}\smallint^x_{x_0}(e^\vp B)\rvert_{t}\,dx}=
\sqrt{2e^{-\vp}\smallint^x_{x_0}(\p{x}\delta_2-\p{t}\delta_1)\rvert_{t}\,dx}\,.
\end{equation}

On the other hand, being $\p{t}\vp=0$, equation \eqref{eq:fgh} can be replaced by
\begin{equation}\label{eq:lkiqw}
b_\pm=e^{-\vp}a^{-1}\p{x}k_\pm\,.
\end{equation}
that, bearing in mind definition \eqref{eq:k} and expression \eqref{eq:rBbare}, finally yields
\begin{equation}\label{eq:lkop}
b\equiv b_\pm=-\tfrac{1}{2}\p{x}\vp-\p{x}\B\,.
\end{equation}
Note that in this case the perturbed families $x_\ve^\pm$ return the same value for $b_\pm$.

We have thus obtained all the crucial ingredients to evaluate our nonlocal constant \eqref{eq:ddd} and, more specifically, we get
\begin{equation}\label{eq:ddddsd}.
I_\pm=(\dx\pm\B)\, e^{\vp/2}e^{\mp\int_t(\p{x}\vp/2+\p{x}\B)d\tau}\,.
\end{equation}
with $\B$ given by \eqref{eq:rBbare}.

\begin{theorem}\label{teo:xxffx}
Let $\lambda = -\int e^\vp B \,dx$ with  $\p{t}\vp=0$. Then, $I$ is  constant along the solutions of the Jacobi equation $\ddx+\tfrac{1}{2}\partial_x\vp\,\dx^2+B=0$ with
\begin{equation}\label{eq:Il}
I\equiv\tfrac{1}{2}\dx^2e^{\vp}+\lambda-\int^t_{t_0}\p{t}\lambda\,dt\,.
\end{equation}
\end{theorem}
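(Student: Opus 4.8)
The plan is to build \eqref{eq:Il} out of the two nonlocal constants \eqref{eq:ddddsd}, exactly as the autonomous Theorem~\ref{teo:xxx} was built from its counterparts, and then to pay the extra price demanded by the explicit time-dependence. First I would form the symmetric product $\tfrac{1}{2}I_+I_-$. The exponential factors of $I_+$ and $I_-$ carry the same integrand $\p{x}\vp/2+\p{x}\B$ with opposite signs, so they cancel outright and leave the manifestly local expression $\tfrac{1}{2}(\dx^2-\B^2)e^\vp$. Inserting \eqref{eq:rBbare}, which upon identifying $\lambda=-\smallint_x e^\vp B\,dx$ gives $\tfrac{1}{2}\B^2e^\vp=\lambda$, collapses this product to the Jacobi energy $\tfrac{1}{2}\dx^2e^\vp-\lambda$.

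In contrast with the autonomous situation, this energy is not conserved on its own, and the second step is to measure precisely by how much it fails. Differentiating $\tfrac{1}{2}\dx^2e^\vp-\lambda$ along a solution and using the equation of motion $\ddx=-\tfrac{1}{2}\p{x}\vp\,\dx^2-B$ together with $\p{x}\lambda=-e^\vp B$ and $\p{t}\vp=0$, the cubic piece $\tfrac{1}{2}\p{x}\vp\,\dx^3e^\vp$ produced by $\tfrac{d}{dt}(\tfrac{1}{2}\dx^2e^\vp)$ is cancelled by the one fed in through the equation of motion, while $-B\dx e^\vp$ is cancelled by $\p{x}\lambda\,\dx=-e^\vp B\,\dx$. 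What is left is nothing but the explicit-time residual $-\p{t}\lambda$.

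The decisive feature is that this residual is a point function of $t$ and $x$ only, with no surviving dependence on $\dx$. For that reason it is the total time-derivative of the nonlocal integral $\int_{t_0}^t\p{t}\lambda\,dt$ taken along the motion, and adding that integral restores conservation, reproducing the conserved quantity displayed in \eqref{eq:Il} (with the signs as produced by the differentiation just described). A final one-line check confirms $\dot I=0$.

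I expect the middle step to be the main obstacle: one must verify that every $\dx$-dependent contribution generated by the non-autonomous terms cancels identically, so that the entire defect reduces to $\p{t}\lambda$ and nothing more. Had any $\dx$-proportional term survived, it could not have been written as the derivative of a $t$-only integral and the construction would collapse. Conceptually this is reassuring, since \eqref{eq:Il} is precisely the Gorni--Zampieri constant \eqref{eq:nonlocGZ} associated with the infinitesimal time-translation $x_\ve=x(\cdot+\ve)$, whose local part is the energy $\dx\,\p{\dx}L-L$ and whose nonlocal part is $\int_{t_0}^t\p{t}L\,ds$; this viewpoint explains both the appearance of the Jacobi energy and the inevitability of the compensating nonlocal integral.
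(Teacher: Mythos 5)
Your proof is correct, and at the decisive step it takes a genuinely different route from the paper's --- one that is in fact more watertight. The paper also starts from $\tfrac{1}{2}I_+I_-=\tfrac{1}{2}(\dx^2-\B^2)e^{\vp}$, but it treats the $I_\pm$ of \eqref{eq:ddddsd} as bona fide constants and produces the nonlocal term algebraically: writing $\delta_1=\p{x}\eta$ and $\lambda=\delta_2-\p{t}\eta$, it re-expands the $x$-integral of \eqref{eq:rBbare} as $\B^2=2e^{-\vp}\big(\lambda-\smallint^t_{t_0}\p{t}\lambda\,dt\big)$ and substitutes. You instead accept that the local part $\tfrac{1}{2}\dx^2e^{\vp}-\lambda$ is \emph{not} conserved, compute its defect $-\p{t}\lambda$ by differentiating along solutions (your cancellations check out: the cubic terms cancel via the equation of motion, and $-e^{\vp}B\,\dx$ cancels against $-\p{x}\lambda\,\dx=+e^{\vp}B\,\dx$), and then compensate. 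What your route buys is robustness against a real subtlety that the paper glosses over: when $\p{t}B\ne0$, the function $e^{\vp/2}\B$ built from \eqref{eq:rBbare} is time-dependent, so the first constraint of \eqref{eq:klk} ($\p{t}k_\pm=0$), on which the constancy of \eqref{eq:ddddsd} rests, cannot hold; the $I_\pm$ are then not separately constant --- one finds $\tfrac{d}{dt}\big(\tfrac{1}{2}I_+I_-\big)=-\p{t}\lambda$, which is exactly your defect --- and the paper's rewriting of $\B^2$ silently replaces the fixed-endpoint quantity $\lambda(t,x_0)$ by the along-the-motion integral $\smallint^t_{t_0}\p{t}\lambda\,dt$. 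Your direct verification needs neither of these steps; the product of the $I_\pm$ serves, as you use it, only as a heuristic to identify the local part. Your parenthesis on signs is also warranted: both your computation and the paper's own proof yield $I=\tfrac{1}{2}\dx^2e^{\vp}-\lambda+\int^t_{t_0}\p{t}\lambda\,dt$ under $\p{x}\lambda=-e^{\vp}B$, so the display \eqref{eq:Il} as printed carries a sign typo (the examples of Subsection \ref{sec:lkjhg} regain consistency by taking $\lambda=+\smallint e^{\vp}B\,dx$). Finally, your closing observation is correct: \eqref{eq:Il} is precisely the Gorni--Zampieri constant \eqref{eq:nonlocGZ} for the time-translation family $x_\ve(t)=x(t+\ve)$, whose nonlocal part $\int_{t_0}^t\p{t}L\,ds$ reduces, after the substitution $\delta_1=\p{x}\eta$ and up to boundary terms, to $\int_{t_0}^t\p{t}\lambda\,ds$ --- arguably the cleanest conceptual derivation of the theorem.
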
 
\begin{proof}
Use expression \eqref{eq:ddddsd} to calculate $I\equiv\tfrac{1}{2} I_+I_-=\tfrac{1}{2}(\dx^2-\B^2)\,e^{\vp}$. Define $\delta_1\equiv \p{x}\eta$ and consider expression \eqref{eq:rBbare}, that can be rewritten as 
\begin{eqnarray}
\B^2&=&2e^{-\vp}\smallint^x_{x_0}\p{x}\big(\delta_2-\p{t}\eta\big)\rvert_{t}\,dx\,,\nonumber\\
&=&2e^{-\vp}\big[\delta_2-\p{t}\eta-\smallint^t_{t_0}\p{t}\big(\delta_2-\p{t}\eta\big)\,dt\big]\,,\nonumber \\
\label{eq:lol}
&=&2e^{-\vp}\big(\lambda-\smallint^t_{t_0}\p{t}\lambda\,dt\big)\,.
\end{eqnarray}
where we defined $\lambda \equiv\delta_2-\partial_t\eta$.
Finally, substitute expression \eqref{eq:lol} inside $I$. It follows from \eqref{eq:constrLL} that $e^\vp B=-\p{x}\lambda$.
\end{proof}

Notice that, as expected, the integrand of \eqref{eq:Il} depends only on $x$ and $t$. In addition, as we will see in the following Remark \ref{rem:1}, expression \eqref{eq:Il} exactly recovers our previous result \eqref{eq:I} when evaluated in the autonomous case.

\begin{remark}\label{rem:1}
One might wonder under which conditions expression \eqref{eq:Il} becomes a true first integral, in the sense of a local function of $t$. For this purpose, it is sufficient to require that $\p{t}\lambda=\dot\psi$ for some $\psi=\psi(t)$. Derive such condition with respect to $x$, to get $0=\p{x}\p{t}\lambda$. Switch the partial derivatives, and use the hypothesis $\p{x}\lambda=-e^\vp B$ of Theorem \ref{teo:xxffx} to finally write
$0=\p{t}(e^\vp B)=e^\vp \p{t}B$, which implies $\p{t}B=0$. Since in this subsection we also assume $\p{t}\vp=0$, expression \eqref{eq:Il} returns a true first integral only if we move to the autonomous case.
\end{remark}


\subsubsection{Application: non-autonomous Painlev\'{e}-Gambier equations}
\label{sec:lkjhg}

There are several equations of the \emph{Painlev\'{e}-Gambier} classification which belong to the Jacobi parametrization \eqref{eq:eom} (see, e.g. Refs. \cite{GUHA20103247,Dambrosi2021}).  

\begin{example}

The Painlev\'{e}-Gambier equation {\small IV} is
\begin{equation}\label{eq:Painlevesd}
\ddx-(6x^2+t)=0\,.
\end{equation}

Let us first establish a comparison between equation \eqref{eq:Painlevesd} and Theorem \ref{teo:xxffx} to infer that $\vp=0$ and $B=-(6x^2+t)$. Since $e^\vp B=-(6x^2+t)$, we have $\lambda=-x(2x^2+t)$. Then, expression \eqref{eq:Il} provides the nonlocal  constant
\begin{equation}
I=\tfrac{1}{2}\dx^2-2x^3-xt+\smallint_{t_0}^t x\,dt\,.
\end{equation}

\end{example}

\begin{example}

The Painlev\'{e}-Gambier equation {\small XX} is
\begin{equation}\label{eq:Painlevesgfgfd}
\ddx-\tfrac{1}{2}x^{-1}\dx^2-4x^2-2tx=0\,.
\end{equation}

A comparison between equation \eqref{eq:Painlevesgfgfd} and Theorem \ref{teo:xxffx} yields $\vp=-\ln x$ and $B=-2x(2x+t)$. Since $e^\vp B=-2(2x+t)$, we have $\lambda=-2x(x+t)$. Then, expression \eqref{eq:Il} yields
\begin{equation}
I=\tfrac{1}{2}\dx^2x^{-1}-2x(x+t)+2\smallint_{t_0}^t x\,dt\,.
\end{equation}

\end{example}



\subsection{The $\p{t}\vp\ne0$ case}
\label{eq:lowe}

When $\p{t}\vp\ne0$, the mathematical structure of equation \eqref{eq:kjl} is more complicated to be analyzed.
We start by multiplying equation \eqref{eq:kjl} by $e^{2(\ln a+\vp)}$ and using definition \eqref{eq:k} to write
\begin{equation}\label{eq:lki}
\p{x}k_\pm^2\pm2\p{t}a e^\vp k_\pm=-2 B a^2 e^{2\vp}\,.
\end{equation}

We take the $\partial_t$ derivative of equation \eqref{eq:lki} and switch $\p{t}\p{x}k^2_\pm=\p{x}\p{t}k^2_\pm$. Consequently, the above equation can be then written as
\begin{equation}\label{eq:lopw}
\p{x}\p{t}k^2_\pm\pm2\p{t}(\p{t}a e^\vp) k_\pm\pm2\p{t}a e^\vp (\p{t} k_\pm)=-2\p{t}(B a^2 e^{2\vp})\,.
\end{equation}

Furthermore, since $\p{t}k^2_\pm=2k_\pm\p{t}k_\pm=0$ (remember that expression \eqref{eq:klk} fixes $\p{t}k_\pm=0$),  from equation \eqref{eq:lopw} it follows that 
\begin{equation}\label{eq:swe}
\p{t}(\p{t}a\, e^\vp) k_\pm=\mp\p{t}(B a^2 e^{2\vp})\,,
\end{equation}
that can be solved with respect to $k_\mp$. More precisely, we obtain
\begin{equation}\label{eq:klol}
k_\pm=\mp\frac{\p{t}(B a^2 e^{2\vp})}{\p{t}(\p{t}a\, e^\vp)}\,.
\end{equation}
which must satisfy the set of conditions \eqref{eq:klk}.

An immediate consequence of the above result is the evaluation of $\B$. In fact, by plugging \eqref{eq:klol} and \eqref{eq:at} in the right hand side of \eqref{eq:k} we obtain
\begin{equation}\label{eq:lopopo}
\B_\pm=\pm4\frac{\p{t}B+B\p{t}\vp}{2\p{t}\p{t}\vp+(\p{t}\vp)^2}\,.
\end{equation}
On the other hand,  we can use expression \eqref{eq:lopopo} to calculate $b_\pm=B/\B_\pm$. Hence,
\begin{equation}\label{eq:lnddb}
b_\pm=\pm\frac{1}{4}\frac{2\p{t}\p{t}\vp+(\p{t}\vp)^2}{\p{t}\ln B+\p{t}\vp}\,.
\end{equation}

\begin{theorem}\label{teo:xxkjkffx}
Let $\p{t}\vp\ne0$ and $Be^\vp=\rho_1 e^{\vp/2}+\rho_2$ for some $\rho_{1,2}=\rho_{1,2}(x)$ such that
\begin{equation}\label{eq:rr}
\rho_1'=\p{t}(\ln\p{t}\vp)\big(e^{\vp/2}+\rho_2/\rho_1\big)\,.
\end{equation}
Then, $I$ is constant along the solutions of the Jacobi equation $\ddx+\tfrac{1}{2}\partial_x\vp\,\dx^2+\partial_t\vp \,\dx+B=0$ with
\begin{equation}\label{eq:lkj}
\!I\!=\!\left(\!\dx e^{\vp/2}\!+\!\frac{2\rho_1}{\p{t}(2\ln \p{t}\vp\!+\!\vp)}\!\right)\!\exp\!\left\{\frac{1}{2}\!\int_t 
\p{t}\big(2\ln \p{t}\vp\!+\!\vp\big)\!\left(\!1\!+\!\frac{\rho_2}{\rho_1}e^{-\vp/2}\!\right)\!d\tau\right\}\!.
\end{equation}
\end{theorem}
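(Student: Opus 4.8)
The plan is to follow the same skeleton used for Theorems~\ref{teo:xxx} and~\ref{teo:xxffx}: start from the nonlocal constant \eqref{eq:ddd}, specialize every free function to the $\p{t}\vp\ne0$ branch, and substitute the hypotheses so that the two families $I_+$ and $I_-$ collapse into a single conserved expression. Concretely, with $a=e^{-\vp/2}$ from \eqref{eq:at} one has $ae^\vp=e^{\vp/2}$, so \eqref{eq:ddd} reads $I_\pm=(\dx\pm\B_\pm)\,e^{\vp/2}\etbb$; by the construction leading to \eqref{eq:kjl} this is a genuine nonlocal constant provided $\B_\pm$ solves the governing equation and $b_\pm=B/\B_\pm$. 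For $\p{t}\vp\ne0$ the candidate $\B_\pm$ is the one already isolated in \eqref{eq:lopopo}, so the proof amounts to feeding the hypothesis $Be^\vp=\rho_1 e^{\vp/2}+\rho_2$ into \eqref{eq:lopopo} and into the exponential.

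First I would simplify $\B_\pm$. Since $\rho_{1,2}=\rho_{1,2}(x)$ are time-independent, $\p{t}(Be^\vp)=\tfrac12\rho_1\p{t}\vp\,e^{\vp/2}$, whence $\p{t}B+B\p{t}\vp=e^{-\vp}\p{t}(Be^\vp)=\tfrac12\rho_1\p{t}\vp\,e^{-\vp/2}$. Using the key algebraic identity $2\p{t}\p{t}\vp+(\p{t}\vp)^2=\p{t}\vp\,\p{t}(2\ln\p{t}\vp+\vp)$ to rewrite the denominator of \eqref{eq:lopopo}, the factor $\p{t}\vp$ cancels and I obtain $\B_\pm=\pm\,2\rho_1 e^{-\vp/2}/\p{t}(2\ln\p{t}\vp+\vp)$. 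Writing $B=\rho_1 e^{-\vp/2}+\rho_2 e^{-\vp}$ and forming $b_\pm=B/\B_\pm$ then yields $b_\pm=\pm\tfrac12\,\p{t}(2\ln\p{t}\vp+\vp)\,(1+\tfrac{\rho_2}{\rho_1}e^{-\vp/2})$.

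With these two formulas in hand the conclusion is immediate. In $(\dx\pm\B_\pm)$ the product of signs makes $\pm\B_\pm=+2\rho_1 e^{-\vp/2}/\p{t}(2\ln\p{t}\vp+\vp)$ sign-independent, so the algebraic prefactor $\dx e^{\vp/2}+2\rho_1/\p{t}(2\ln\p{t}\vp+\vp)$ of \eqref{eq:lkj} is common to both families; likewise $\pm b_\pm$ is sign-independent, so the exponent $\pm\int_t b_\pm d\tau$ reduces to the one displayed in \eqref{eq:lkj}. Hence $I_+=I_-$, and this common value is precisely $I$. Note that, in contrast with the earlier theorems, no product $\tfrac12 I_+I_-$ is needed here: the two families already coincide, which is why $I$ is linear (rather than quadratic) in $\dx$.

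The step I expect to be the real obstacle is justifying that the $\B_\pm$ of \eqref{eq:lopopo} is admissible, not merely a formal candidate: equation \eqref{eq:lopopo} was extracted from the $\p{t}$-differentiated relation \eqref{eq:swe}, so it satisfies the \emph{differentiated} constraint automatically but must still be checked against the undifferentiated equation \eqref{eq:kjl} (equivalently, against the requirement that $k_\pm=ae^\vp\B_\pm$ of \eqref{eq:k} be genuinely $t$-independent, the first line of \eqref{eq:klk}). This is exactly where hypothesis \eqref{eq:rr} is consumed: substituting the simplified $\B_\pm$ into \eqref{eq:kjl}, where $\ln a+\vp=\vp/2$ and $\p{t}\ln a=-\tfrac12\p{t}\vp$, and computing $\p{x}\B_\pm$ (which brings in $\rho_1'$) should, after collecting the $t$-dependent and $x$-dependent pieces, collapse to the relation $\rho_1'=\p{t}(\ln\p{t}\vp)(e^{\vp/2}+\rho_2/\rho_1)$. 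Reconciling those pieces is the fiddly part, and I would organize that computation deliberately rather than trusting the cancellations to occur on their own.
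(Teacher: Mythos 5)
Your route is the paper's own: specialize \eqref{eq:ddd} with $a=e^{-\vp/2}$, feed the hypothesis $Be^\vp=\rho_1e^{\vp/2}+\rho_2$ into \eqref{eq:lopopo} and \eqref{eq:lnddb}, observe that $\pm\B_\pm$ and $\pm b_\pm$ are sign-independent so that $I_+=I_-$, and read off \eqref{eq:lkj}. The algebra you carry out explicitly (the identity $2\p{t}\p{t}\vp+(\p{t}\vp)^2=\p{t}\vp\,\p{t}(2\ln\p{t}\vp+\vp)$, the simplifications $\B_\pm=\pm 2\rho_1e^{-\vp/2}/\p{t}(2\ln\p{t}\vp+\vp)$ and $b_\pm=\pm\tfrac12\p{t}(2\ln\p{t}\vp+\vp)(1+\tfrac{\rho_2}{\rho_1}e^{-\vp/2})$) is correct and is exactly what the paper's proof compresses into ``plug $B$ inside \eqref{eq:lopopo} and \eqref{eq:lnddb}.''

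The genuine gap is the step you flag and then defer: the admissibility check. You predict that substituting the simplified $\B_\pm$ into \eqref{eq:kjl} (equivalently into the conditions \eqref{eq:klk}) ``should collapse'' to \eqref{eq:rr}; done honestly, it does not. Write $g\equiv\p{t}(2\ln\p{t}\vp+\vp)$. Then $k_\pm=ae^\vp\B_\pm=\pm 2\rho_1/g$, so the first condition of \eqref{eq:klk}, $\p{t}k_\pm=0$, demands $\p{t}g=0$, a constraint on $\vp$ that appears nowhere among the stated hypotheses; and the second condition of \eqref{eq:klk} (which is what \eqref{eq:kjl} encodes) works out to
\begin{equation*}
\p{x}\Big(\frac{2\rho_1}{g}\Big)=-\,\p{t}(\ln\p{t}\vp)\Big(e^{\vp/2}+\frac{\rho_2}{\rho_1}\Big)-\tfrac{1}{2}\,\p{t}\vp\,\frac{\rho_2}{\rho_1}\,,
\end{equation*}
which differs from \eqref{eq:rr} in sign and through the terms in $\p{x}g$ and $\p{t}\vp\,\rho_2/\rho_1$; the two statements coincide only under extra assumptions (both sides vanish in Example \ref{ex:fd}, where $g=1$, $\rho_2=0$, $\rho_1=\varrho$ constant, but not in general). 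To be fair, this is exactly where the paper's own proof is loose: it asserts $k_\pm=\pm\rho_1$, whereas the construction gives $k_\pm=\pm 2\rho_1/g$ (already in Example \ref{ex:fd} one finds $k_\pm=\pm2\varrho$, not $\pm\varrho$), and its identification of the second condition of \eqref{eq:klk} with \eqref{eq:rr} inherits the same discrepancies. So your proposal reproduces the paper's argument to the same depth and no further; the ``fiddly part'' you postponed is not a formality but the place where the proof, as written in the paper, actually fails to close without tacit additional hypotheses (at least $\p{t}\p{t}(2\ln\p{t}\vp+\vp)=0$ together with the displayed relation).
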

\begin{proof}
We begin substituting \eqref{eq:at} in \eqref{eq:klol} and imposing $\p{t}{k_\pm}=0$ (see the first condition of \eqref{eq:klk}). 
As a consequence, we get $\p{t}{(B e^\vp})=\rho_1\p{t}e^{\vp/2}$, whose solution is $B=\rho_1 e^{-\vp/2}+\rho_2 e^{-\vp}$ (our first hypothesis). We plug $B$ inside \eqref{eq:lopopo} and \eqref{eq:lnddb} to evaluate $\B_\pm$ and $b_\pm$, that we use with definition \eqref{eq:ddd} to get our final result
\eqref{eq:lkj}. Note that $I_+=I_-$, hence $I=I_\pm$. Since $k_\pm=\pm\rho_1$, our second hypothesis \eqref{eq:rr} is obtained from $\p{x}k_\pm=0$ (the second condition of \eqref{eq:klk}).
\end{proof}
Notice that also in this case the integrand of \eqref{eq:lkj} depends only on $x$ and $t$.

\begin{remark}\label{rem:df}
One might wonder under which conditions expression \eqref{eq:lkj} becomes a true first integral, in the sense of a local function of $t$. For this purpose, since the integrand depends only on $t$ and $x$, it is sufficient to require that $\p{t}(2\ln \p{t}\vp\!+\!\vp)\!\left(1\!+\!\rho_2/\rho_1e^{-\vp/2}\right)=\dot \psi(t)$ for some $\psi=\psi(t)$.
\end{remark}


\subsubsection{Application: non-autonomous Jacobi equation}
\label{eq:koprwrt}

\begin{example}
\label{ex:fd}

Consider the following Jacobi-type equation
\begin{equation}\label{eq:jaerctot}
\ddx+\tfrac{1}{2}\dx^2+\dx+\varrho\, e^{-(t+x)/2}=0\,,
\end{equation}
with $\varrho$ a constant parameter. Since $B=\varrho\, e^{-(t+x)/2}$ and $\p{x}\vp=\p{t}\vp=1$, we have $\vp=t+x$. Hence, equation \eqref{eq:jaerctot} satisfies the hypotheses of Theorem \ref{teo:xxkjkffx} with $\rho_1=\varrho$ and $\rho_2=0$.  Note that Remark \ref{rem:df} is also satisfied by $\psi=t$. Hence, expression \eqref{eq:lkj} yields the following true first integral
\begin{equation}\label{eq:lpo}
\mathcal{I}=e^{\,t/2}\big\{\dx e^{(t+x)/2}+2\varrho\big\}.
\end{equation}

Being linear in $\dx$, $\mathcal{I}$ turns out to be a precious tool to find a solution of equation \eqref{eq:jaerctot}. In fact, expression \eqref{eq:lpo} can be rewritten as $\dot{\mathcal{J}}(t,x)=0$, with
\begin{equation}\label{eq:kloi}
\mathcal{J}= 2 \,e^{\,x/2}+\tilde{\mathcal{I}} e^{-t}-4\varrho \,e^{-t/2}\,.
\end{equation}
Hence, $\mathcal{J}$ is a first integral too. After some calculations, it can be easily checked that expression \eqref{eq:kloi} solves directly equation \eqref{eq:jaerctot} with
\begin{equation}
x(t)=2\ln\big\{2\varrho \,e^{-t/2}-\tfrac{1}{2}\,\tilde{\mathcal{I}}\, e^{-t}+\tfrac{1}{2}\tilde{\mathcal{J}}\big\}\,.
\end{equation}
Here, the $\tilde{\mathcal{I}}$ and $\tilde{\mathcal{J}}$ parameters are constants.

\end{example}

\begin{example}
An interesting application of Theorem \ref{teo:xxkjkffx} arises when $\rho_2(x)=0$. Under such condition, the hypotheses  of our theorem become
\begin{equation}\label{eq:klo}
\begin{cases}
\rho_1'(x)=\p{t}\!\ln\p{t}\vp(t,x)\,e^{\vp(t,x)/2},\\
B(t,x)e^{\vp(t,x)}=\rho_1(x)\, e^{\vp(t,x)/2}\,.
\end{cases}
\end{equation}
From the above constraints, the first differential equation can be solved easily. More precisely, we get
\begin{equation}\label{eq:loeprt}
\vp(t,x)=2\ln\left\{\frac{2\rho_1'(x)+e^{\frac{1}{2}c_1(x)\left[t+c_2(x)\right]}}{c_1(x)}\right\}.
\end{equation}
with $c_1(x)\ne 0$ and $c_2(x)$ integration functions.

On the other hand, using expression \eqref{eq:loeprt}, the second equation of \eqref{eq:klo} gives
\begin{equation}
\label{eq:bpo}
B(t,x)=\frac{c_1(x)\rho_1(x)}{2\rho_1'(x)+e^{\frac{1}{2}c_1(x)[t+c_2(x)]}}\,.
\end{equation}
With the above formulas in mind, we see that once fixed $c_1(x)$, $c_2(x)$ and $\rho_1(x)$ we can deduce the class of Jacobi equations \eqref{eq:eom} characterized by a first integral like \eqref{eq:lkj} with $\rho_2(x)=0$. This is the result of the following proposition.

\begin{proposition}
Let $c_{1,2,3}=c_{1,2,3}(x)$ be three free functions of $x$ and define $\vartheta(t,x)\equiv c_1 (c_2 + t)$. Then,  $\mathcal{I}$ is a first integral for the equation
\begin{equation}
\label{eq:loertyu}
\ddx+
\frac{4 c_1 c_3'' \!-\! 4 c_3' c_1' \!+\! e^{\frac{1}{2} \vartheta}\big[c_1' (\vartheta\!-\!2) \!+\! c_1^2 c_2'\big]}{2c_1 \big(2 c_3'\!+\!e^{\frac{1}{2} \vartheta}\big)}\,\dx^2
+\frac{c_1 e^{\frac{1}{2} \vartheta}}{2 c_3'\!+\!e^{\frac{1}{2} \vartheta}}\,\dx
+\frac{c_1c_3}{2c_3'\!+\!e^{\frac{1}{2}\vartheta}}
=0\,,
\end{equation}
with
\begin{equation}\label{eq:LPq}
\mathcal{I}=e^{\frac{1}{2}c_1t}\Big[2\big(c_3'+c_3\big)+\dx e^{\frac{1}{2}\vartheta}\Big]c_1^{-1}\,.
\end{equation}
\begin{proof}
Let us define $\vartheta\equiv2\ln\p{t}\vp+\vp$. By choosing expression \eqref{eq:loeprt} in the definition above it turns out that $\vartheta=c_1(t+c_2)$. Hence, equation \eqref{eq:loertyu} and its first integral \eqref{eq:LPq} are a direct consequence of \eqref{eq:eom} and \eqref{eq:lkj}, where $\vp$ and $B$ have the form \eqref{eq:loeprt} and \eqref{eq:bpo}. Here, we assumed that $\rho_2=0$ and we defined $c_3\equiv\rho_1$.
\end{proof}
\end{proposition}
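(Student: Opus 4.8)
The plan is to read the statement as the $\rho_2=0$ corollary of Theorem~\ref{teo:xxkjkffx}, so that no fresh conservation argument is required. Under the identification $c_3\equiv\rho_1$, the constraint system \eqref{eq:klo} has already been solved in the preceding example: it pins down $\vp$ in the closed form \eqref{eq:loeprt} and $B$ in the form \eqref{eq:bpo}. Consequently the whole proof should reduce to inserting these two explicit profiles into the generic Jacobi equation \eqref{eq:eom} and into the generic first integral \eqref{eq:lkj}, and verifying that the outputs are exactly \eqref{eq:loertyu} and \eqref{eq:LPq}.

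First I would isolate the single algebraic identity that renders every subsequent substitution painless. Setting $\vartheta\equiv2\ln\p{t}\vp+\vp$ and differentiating \eqref{eq:loeprt} in $t$, the exponential $e^{\frac12 c_1(t+c_2)}$ survives intact while the logarithm's argument reappears in the denominator, so that $2\ln\p{t}\vp=c_1(t+c_2)-\vp$; adding $\vp$ cancels the logarithmic part and leaves $\vartheta=c_1(c_2+t)$, precisely the quantity named in the statement. This identity is the linchpin: it collapses every occurrence of $\vartheta$ in \eqref{eq:lkj}, gives $\p{t}\vartheta=c_1$, and lets me write $e^{\vp/2}=(2c_3'+e^{\vartheta/2})/c_1$ as a function of $\vartheta$ alone.

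With this in hand I would assemble the three coefficients of the target equation. The linear term is immediate, since $\p{t}\vp=c_1e^{\vartheta/2}/(2c_3'+e^{\vartheta/2})$ reproduces the $\dx$-coefficient of \eqref{eq:loertyu}, and the free term is simply $B$ from \eqref{eq:bpo} after the relabelling $c_3=\rho_1$. The genuinely laborious piece --- and the step I expect to be the main obstacle --- is the quadratic coefficient $\tfrac12\p{x}\vp$: differentiating the logarithm in \eqref{eq:loeprt} in $x$ drags in $c_1'$, $c_2'$, $c_3'$ and $c_3''$ together with $\p{x}\vartheta=c_1'\vartheta/c_1+c_1c_2'$, and only after clearing the common denominator $2c_1(2c_3'+e^{\vartheta/2})$ and regrouping the $e^{\vartheta/2}$ terms does the numerator $4c_1c_3''-4c_3'c_1'+e^{\vartheta/2}[c_1'(\vartheta-2)+c_1^2c_2']$ of \eqref{eq:loertyu} emerge.

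Finally I would build the first integral from \eqref{eq:lkj} specialised to $\rho_2=0$. Its exponential argument reduces to $\tfrac12\int_t\p{t}\vartheta\,d\tau=\tfrac12 c_1 t$, which is exactly the local-conservation condition of Remark~\ref{rem:df} and supplies the prefactor $e^{\frac12 c_1 t}$; meanwhile the bracket $\dx e^{\vp/2}+2\rho_1/\p{t}\vartheta$ becomes $\dx(2c_3'+e^{\vartheta/2})/c_1+2c_3/c_1$. Factoring out $c_1^{-1}$ and regrouping the $\dx$-terms then delivers the first integral \eqref{eq:LPq}; the only point demanding care is the bookkeeping of the $\dx$-dependence carried inside $e^{\vp/2}$, so that the bracket is collected into its final form correctly.
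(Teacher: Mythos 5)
Your strategy coincides with the paper's own proof: view the proposition as the $\rho_2=0$ specialisation of Theorem~\ref{teo:xxkjkffx}, establish the identity $\vartheta=2\ln\p{t}\vp+\vp=c_1(c_2+t)$ from \eqref{eq:loeprt}, and substitute \eqref{eq:loeprt} and \eqref{eq:bpo} into \eqref{eq:eom} and \eqref{eq:lkj}. Your verification of the three coefficients of \eqref{eq:loertyu}, including the laborious $\dx^2$ one, is correct. The gap sits exactly at the point you yourself flagged as ``demanding care'': the substitution into \eqref{eq:lkj} does \emph{not} deliver \eqref{eq:LPq}. Since the whole factor $e^{\vp/2}=(2c_3'+e^{\vartheta/2})/c_1$ multiplies $\dx$, the bracket of \eqref{eq:lkj} becomes
\[
\dx\,e^{\vp/2}+\frac{2\rho_1}{\p{t}\vartheta}
=\frac{1}{c_1}\Big[2c_3'\,\dx+e^{\frac{1}{2}\vartheta}\dx+2c_3\Big],
\]
whereas the bracket of \eqref{eq:LPq} is $\frac{1}{c_1}\big[2c_3'+2c_3+\dx\,e^{\frac{1}{2}\vartheta}\big]$; the difference $2c_3'(\dx-1)/c_1$ cannot be removed by ``regrouping the $\dx$-terms'' and vanishes only if $c_3'\equiv0$. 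Likewise, the exponent of \eqref{eq:lkj} is $\tfrac{1}{2}\int_t\p{t}\vartheta\,d\tau=\tfrac{1}{2}\int_t c_1(x(\tau))\,d\tau$, a genuinely nonlocal quantity because $c_1$ depends on $x$: it collapses to $\tfrac{1}{2}c_1t$, and Remark~\ref{rem:df} applies, only when $c_1$ is constant. So both of your reductions hold precisely in the subcase $c_1'=0$, $c_3'=0$, which covers Example~\ref{ex:fd} but not three free functions $c_{1,2,3}(x)$ as the statement requires.

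Nor is this a reparable bookkeeping slip. Take $c_1=1$, $c_2=0$, $c_3=x$: then $\vartheta=t$, equation \eqref{eq:loertyu} reads $\ddx+\tfrac{e^{t/2}}{2+e^{t/2}}\dx+\tfrac{x}{2+e^{t/2}}=0$, and along its solutions one finds $\dot{\mathcal{I}}\neq0$ both for \eqref{eq:LPq} and for the literal substitution $e^{t/2}\big[(2+e^{t/2})\dx+2x\big]$ (the latter gives $\dot I=3e^{t/2}\dx$). The root cause is upstream: hypothesis \eqref{eq:rr} of Theorem~\ref{teo:xxkjkffx} does not enforce the second condition of \eqref{eq:klk} (in this example $\p{x}k_\pm=\pm2$ while \eqref{eq:klk} demands $\mp1$), so the nonlocal constant \eqref{eq:ddd} need not be conserved under the hypotheses being invoked. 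In fairness, the paper's own proof is the same one-line ``direct consequence'' claim and glosses over exactly this computation; your write-up faithfully reproduces that strategy, but the step you asserted would go through is the one that fails, so the proposal does not constitute a proof.
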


Let $\bar{c}_{1,2,3} \in\mathbb{R}$ and $\omega_{1,2}\in\mathbb{R}$ be some constant parameters. Interestingly, if we take $c_1=\bc{1}$, $c_2=\omega_1\bc{1}^{-1}x+\bc{2}$ and $c_3=\omega_2\bc{1}+\bc{3}e^{-x}$, it is possible to recast expression \eqref{eq:LPq} to look as the conservation law of one more first integral. Indeed, taking the antiderivative in time of \eqref{eq:LPq} we obtain
\begin{equation}
\label{eq:lpqert}
\mathcal{K}=2\omega_1^{-1}e^{\frac{1}{2}(\omega_1 x+\bc{1})}+\tilde{\mathcal{I}}e^{-\bc{1}t}-4\omega_2 e^{-\frac{1}{2}\bc{1}t}\,,
\end{equation}
with $\dot{\mathcal{K}}(t,x)=0$. Expression \eqref{eq:lpqert} is equivalent to a Cauchy problem for a first-order differential equation with separated variables. It is easy to see that the general solution of this equation is 
\begin{equation}
\label{eq:solr}
x(t)=2\omega_1^{-1}\ln\left\{\omega_1 e^{-\frac{1}{2}\bc{1}\bc{2}}\left[2\omega_2 e^{-\frac{1}{2}\bc{1}t}-\tfrac{1}{2}\tilde{\mathcal{I}}e^{-\bc{1}t}+\tfrac{1}{2}\tilde{\mathcal{K}}\right]\right\}\,.
\end{equation}
Here, the $\tilde{\mathcal{I}}$ and $\tilde{\mathcal{K}}$ parameters are constants.

The class of equations \eqref{eq:loertyu} includes as a particular case the equation \eqref{eq:jaerctot} of Example \ref{ex:fd}, where $c_1=1$, $c_2=x$ and $c_3=\varrho$. Such case also corresponds to our results \eqref{eq:lpqert} and \eqref{eq:solr}
with $\bc{1}=\omega_1=1$, $\bc{2}=\bc{3}=0$ and $\omega_2=\varrho$.
\end{example}


\section{Acknowledgments}
\label{sec:ak}
The author would like to thank Professor Gaetano Zampieri for useful discussions.


\nocite{*}
\providecommand{\bysame}{\leavevmode\hbox to3em{\hrulefill}\thinspace}


\end{document}